\newtheorem{proposition}{Proposition}
\newtheorem{corollary}{Corollary}
\begin{document}
\title{\bf On orthogonal transformations of Christoffel equations}
\author{Len Bos%
\footnote{Dipartimento di Informatica, Universit\`a di Verona, Italy;
{\tt leonardpeter.bos@univr.it}}\,,
Michael A. Slawinski%
\footnote{Department of Earth Sciences, 
Memorial University of Newfoundland, Canada; 
{\tt mslawins@mac.com}}\,,
Theodore Stanoev%
\footnote{Department of Earth Sciences, 
Memorial University of Newfoundland, Canada; 
{\tt theodore.stanoev@gmail.com}}\,,
Maurizio Vianello%
\footnote{Dipartimento di Matematica, 
Politecnico di Milano, Italy; 
{\tt maurizio.vianello@polimi.it}}}
\date{}
\maketitle
\paragraph{Abstract:}
The purpose of this paper is to prove the equivalence---under rotations of distinct terms---of different forms of a determinantal equation that appears in the studies of wave propagation in Hookean solids, in the context of the Christoffel equations.
To do so, we prove a general proposition that is not limited to~${\mathbb R}^3$\,, nor is it limited to the elasticity tensor with its index symmetries.
Furthermore, the proposition is valid for orthogonal transformations, not only for rotations.
The sought equivalence is a corollary of that proposition.
\section{Introduction}
The existence and properties of three waves that propagate in a Hookean solid are a consequence of the Christoffel equations \citep[e.g.,][Chapter~9]{Slawinski}, whose solubility condition is
\begin{equation*}
\det\left[
	\sum_{j=1}^3\sum_{\ell=1}^3 c_{ijk\ell}\,p_{j}p_{\ell} -
	\delta_{ik}
\right]=0\,,\qquad i,k=1,2,3\,,
\end{equation*}
which is a cubic polynomial, whose roots are the eikonal equations \citep[e.g.,][Section~7.3]{Slawinski}.
Let us examine the matrix therein,
\begin{equation*}
\left[
	\sum_{j=1}^3 \sum_{\ell=1}^3 c_{ijk\ell}\,p_{j}p_{\ell}
\right]\in {\mathbb R}^{3\times 3}\,,
\end{equation*}
where $c_{ijk\ell}$ is a density-normalized elasticity tensor, whose units are $\rm km^2\,s^{-2}$\,, and $p$ is the wavefront-slowness vector, whose units are $\rm s\,km^{-1}$\,.

Studies of Hookean solids by \citet[equations~(7)--(12)]{IvanovStovas2016} and \citet[equations~(10)--(11)]{IvanovStovas2017} invoke a property that we state as Corollary~\ref{cor:Michael}, which is a consequence of Proposition~\ref{prop:Len}.
\citet{IvanovStovas2016,IvanovStovas2017} verify the equivalence of equations given in Corollary~\ref{cor:Michael}, without a general proof, hence, this paper.

The purpose of this paper is to prove Proposition~\ref{prop:Len} and, hence, Corollary~\ref{cor:Michael}.
In doing so, we gain an insight into a tensor-algebra property that results in this corollary.
The equivalence of the aforementioned equations is not a result of the invariance of a determinant, as suggested by Ivanov (pers. comm.,~2018); it is a consequence of two orthogonal transformations of $c_{ijk\ell}$ and $p_i$ that result in two matrices that are similar to one another.
\section{Proposition and its corollary}
\begin{proposition}
\label{prop:Len}
Consider a tensor,~$c_{ijk\ell}$\,, in ${\mathbb R}^d$\,.
Also, consider a vector,~$p_i$\,, in ${\mathbb R}^d$\,, and an orthogonal transformation,~$A\in {\mathbb R}^{d\times d}\,.$
It follows that matrices
\begin{equation}
\label{eq:Yuri}
\left[\sum_{j=1}^d\sum_{\ell=1}^dc_{ijk\ell}\,\hat p_j\hat p_\ell \right]_{1\leqslant i,k\leqslant d}\in {\mathbb R}^{d\times d}
\end{equation}
and
\begin{equation}
\label{eq:Ivan}
\left[\sum_{j=1}^d\sum_{\ell=1}^d\check c_{ijk\ell}\,p_jp_\ell\right]_{1\leqslant i,k\leqslant d}\in  {\mathbb R}^{d\times d}\,,
\end{equation}
where
\begin{equation*}
\hat{p}_i:=\sum_{j=1}^dA_{i,j}\,p_{j}
\qquad{and}\qquad
\check{c}_{ijk\ell}:=\sum_{m=1}^d\sum_{n=1}^d\sum_{o=1}^d\sum_{q=1}^d
A_{m,i}\,A_{n,j}\,A_{o,k}\,A_{q,\ell}\,c_{mnoq}\,,
\end{equation*}
are similar to one another and, consequently, have the same spectra.
\end{proposition}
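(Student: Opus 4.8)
The plan is to show directly that the matrix in~\eqref{eq:Ivan} equals $A^{\mathrm T}MA$, where $M$ denotes the matrix in~\eqref{eq:Yuri}. Since $A$ is orthogonal, $A^{\mathrm T}=A^{-1}$, so the two matrices are similar and hence share the same characteristic polynomial, and in particular the same spectrum.

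First I would substitute the definition of $\check c_{ijk\ell}$ into the $(i,k)$ entry of~\eqref{eq:Ivan}, obtaining
\begin{equation*}
\sum_{j=1}^d\sum_{\ell=1}^d\check c_{ijk\ell}\,p_jp_\ell
=\sum_{j,\ell}\;\sum_{m,n,o,q}A_{m,i}\,A_{n,j}\,A_{o,k}\,A_{q,\ell}\,c_{mnoq}\,p_jp_\ell\,.
\end{equation*}
The key step is to reorganize this sum so that the summations over $j$ and over $\ell$ act only on the factors $A_{n,j}p_j$ and $A_{q,\ell}p_\ell$, respectively; by the definition $\hat p_n=\sum_jA_{n,j}p_j$ these two inner sums collapse, leaving
\begin{equation*}
\sum_{m,o}A_{m,i}\,A_{o,k}\Bigl(\sum_{n,q}c_{mnoq}\,\hat p_n\hat p_q\Bigr)
=\sum_{m,o}A_{m,i}\,A_{o,k}\,M_{mo}\,,
\end{equation*}
where $M_{mo}=\sum_{n,q}c_{mnoq}\,\hat p_n\hat p_q$ is exactly the $(m,o)$ entry of the matrix in~\eqref{eq:Yuri}.

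Finally I would read the right-hand side in matrix notation: $\sum_{m,o}A_{m,i}\,M_{mo}\,A_{o,k}=\sum_{m,o}(A^{\mathrm T})_{i,m}\,M_{mo}\,A_{o,k}=(A^{\mathrm T}MA)_{ik}$, so that the matrix in~\eqref{eq:Ivan} equals $A^{\mathrm T}MA=A^{-1}MA$, a similarity transformation, which proves the claim. I expect no genuine obstacle here; the only points requiring care are the index bookkeeping---in particular, that it is the \emph{first} index of $A$ that gets contracted against $m$ and $o$, which is why a transpose (rather than $A$ itself) appears on the left---and the observation that the orthogonality of $A$ is precisely what promotes the resulting congruence-type identity to a genuine similarity, and thus what is actually needed for the conclusion about spectra. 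Notably, no symmetry of $c_{ijk\ell}$ and no restriction to $d=3$ enters anywhere, in keeping with the generality advertised in the statement.
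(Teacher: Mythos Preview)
Your proof is correct and follows essentially the same route as the paper: both arguments establish that the matrix in~\eqref{eq:Ivan} equals $A^{\mathrm T}MA$ with $M$ the matrix in~\eqref{eq:Yuri}, by expanding $\check c_{ijk\ell}$ and collapsing the $j,\ell$ sums into $\hat p_n,\hat p_q$. The only cosmetic difference is that the paper first repackages $c_{ijk\ell}$ as a $d\times d$ block matrix $[C_{ik}]$ with $(C_{ik})_{j\ell}=c_{ijk\ell}$ before carrying out the same index computation, whereas you work directly in components throughout.
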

\begin{proof}
The fourth-rank tensor,~$c_{ijk\ell}$\,, in ${\mathbb R}^d$ can be viewed as a $d\times d$~matrix, whose entries are $d\times d$~matrices,
\begin{equation*}
C=[C_{ik}]_{1\leqslant i,k\leqslant d}\in\left({\mathbb R}^{d\times d}\right)^{d\times d}\,,	
\end{equation*}
 with $C_{ik}\in{\mathbb R}^{d\times d}$ and $(C_{ik})_{j\ell}:=c_{ijk\ell}$\,.
Thus, matrix~(\ref{eq:Yuri}) can be written as
\begin{equation}
\label{eq:Len}
\left[\,\hat p^t\,C_{ik}\,\hat p\,\right]_{1\leqslant i,k\leqslant d}
=
\left[(A\,p)^t\,C_{ik}\,(A\,p)\right]_{1\leqslant i,k\leqslant d} 
=
\left[\,p^t\left(A^t\,C_{ik}\,A\right)p\,\right]_{1\leqslant i,k\leqslant d} \in{\mathbb R}^{d\times d}\,,
\end{equation}
where ${}^t$ denotes the transpose.

We claim that matrix~(\ref{eq:Ivan}) can be written as
\begin{equation}
\label{eq:Len}
A^t\left[\,p^t\left(A^t\,C_{ik}\,A\right)p\,\right]_{1\leqslant i,k\leqslant d}A\in{\mathbb R}^{d\times d}\,.
\end{equation}
To see this, we let matrix~(\ref{eq:Len}) be
\begin{equation*}
M=A^t\,X\,A\,,
\end{equation*}
where $X:=\left[\,p^t\left(A^t\,C_{ik}\,A\right)p\,\right]_{1\leqslant i,k\leqslant d}$\,, to write 
\begin{equation*}
M_{i,k}
=
\sum_{m=1}^d\sum_{o=1}^d (A^t)_{i,m}\,X_{m,o}\,A_{o,k}
=
\sum_{m=1}^d\sum_{o=1}^d A_{m,i}\,X_{m,o}\,A_{o,k}
\,.
\end{equation*}
Defining $Y:=A^t\,C_{mo}\,A$\,, we have
\begin{equation*}
X_{m,o}=\sum_{j=1}^d\sum_{\ell=1}^d Y_{j,\ell}\,p_{j}\,p_{\ell}\,,
\end{equation*}
where
\begin{equation*}
Y_{j,\ell}
=
\sum_{n=1}^d\sum_{q=1}^d (A^t)_{j,n}\,(C_{mo})_{nq}\,A_{q,\ell}
=
\sum_{n=1}^d\sum_{q=1}^d  A_{n,j}\,A_{q,\ell}\,c_{mnoq}
\,.
\end{equation*}
Hence, 
\begin{equation*}
X_{m,o}
=
\sum_{j=1}^d\sum_{\ell=1}^d\left(
	\sum_{n=1}^d\sum_{q=1}^d  A_{n,j}\,A_{q,\ell}\,c_{mnoq} 
\right)
\end{equation*}
and, in turn,
\begin{equation*}
M_{i,k}
=
\sum_{j=1}^d\sum_{\ell=1}^d \left(\sum_{m=1}^d\sum_{n=1}^d\sum_{o=1}^d\sum_{q=1}^d
A_{m,i}\,A_{n,j}\,A_{o,k}\,A_{q,\ell}\,
c_{mnoq}\right)p_jp_\ell=
\sum_{j=1}^d\sum_{\ell=1}^d \check{c}_{ijk\ell}\,p_jp_\ell\,,\qquad i,k=1\,,\,\ldots\,,\,d\,,
\end{equation*}
which is matrix~\eqref{eq:Ivan}, as required.
\end{proof}
\begin{corollary}
\label{cor:Michael}
From Proposition~\ref{prop:Len}---and the aforementioned fact that the similar matrices share the same spectrum, as well as the fact that the similarity of matrices is not affected by subtracting from them the identity matrices---it follows that
\begin{equation*}
\det\left[\sum_{j=1}^3\sum_{\ell=1}^3c_{ijk\ell}\,\hat p_j\hat p_\ell-\delta_{ik}\right]=\det\left[\sum_{j=1}^3\sum_{\ell=1}^3\check c_{ijk\ell}\,p_jp_\ell-\delta_{ik}\right]\,,\qquad i,k=1,2,3\,,
\end{equation*}
and, hence, equations
\begin{equation}
\label{eq:phat}
\det\left[\sum_{j=1}^3\sum_{\ell=1}^3c_{ijk\ell}\,\hat p_j\hat p_\ell-\delta_{ik}\right]=0\,,\qquad i,k=1,2,3\,,
\end{equation}
and
\begin{equation}
\label{eq:chat}
\det\left[\sum_{j=1}^3\sum_{\ell=1}^3\check c_{ijk\ell}\,p_jp_\ell-\delta_{ik}\right]=0\,,\qquad i,k=1,2,3\,,
\end{equation}
are equivalent to one another.
\end{corollary}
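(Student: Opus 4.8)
The plan is to specialize Proposition~\ref{prop:Len} to $d=3$ and then follow the consequences through elementary linear algebra. By that proposition the matrices in~\eqref{eq:Yuri} and~\eqref{eq:Ivan} are similar; more precisely, its proof exhibits the explicit conjugation in which the matrix~\eqref{eq:Ivan} equals $A^t\,M\,A$, where $M$ denotes the matrix~\eqref{eq:Yuri} and $A$ is the given orthogonal transformation, so that $A^t=A^{-1}$ and $A^tA=\mathrm{I}$.

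First I would record that conjugation by $A$ commutes with subtracting the identity. Writing $N:=A^t M A$ for the matrix~\eqref{eq:Ivan}, one has
\begin{equation*}
N-\mathrm{I}=A^t M A-A^t A=A^t M A-A^t\,\mathrm{I}\,A=A^t(M-\mathrm{I})A\,,
\end{equation*}
so that $N-\mathrm{I}$ and $M-\mathrm{I}$ are themselves similar. Next I would invoke multiplicativity of the determinant, together with $\det A^t\,\det A=\det(A^tA)=\det\mathrm{I}=1$, to conclude
\begin{equation*}
\det(N-\mathrm{I})=\det A^t\;\det(M-\mathrm{I})\;\det A=\det(M-\mathrm{I})\,,
\end{equation*}
which is precisely the displayed determinantal identity of the corollary, once $\delta_{ik}$ is read as the $(i,k)$ entry of $\mathrm{I}$. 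Finally, a real number is zero if and only if it is zero, so the vanishing of $\det(M-\mathrm{I})$ is equivalent to the vanishing of $\det(N-\mathrm{I})$; this is exactly the claimed equivalence of~\eqref{eq:phat} and~\eqref{eq:chat}.

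There is no genuine obstacle here: the substantive content is Proposition~\ref{prop:Len}, and what remains is only the observation that similarity is unaffected by the shift $M\mapsto M-\mathrm{I}$ and that similar matrices have equal determinants. The one point worth stating explicitly is that orthogonality of $A$ is not actually needed for the determinantal conclusion---any invertible $A$ would suffice, since $\det A^t\det A^{-1}=1$---but orthogonality is what makes the maps $p_i\mapsto\hat p_i$ and $c_{ijk\ell}\mapsto\check c_{ijk\ell}$ in Proposition~\ref{prop:Len} the natural change-of-basis formulas for a vector and a fourth-rank tensor, which is the tensor-algebra insight advertised in the introduction.
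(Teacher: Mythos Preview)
Your proposal is correct and follows essentially the same route as the paper: the corollary's own justification is embedded in its statement (similarity from Proposition~\ref{prop:Len}, stability of similarity under subtraction of the identity, equality of determinants for similar matrices), and you have simply unpacked those three steps explicitly via $N-\mathrm{I}=A^t(M-\mathrm{I})A$ and the multiplicativity of $\det$. Your closing remark that invertibility of $A$ would already suffice for the determinantal identity is a pleasant extra observation not in the paper.
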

\noindent Corollary~\ref{cor:Michael} is valid even without requiring the index symmetries of Hookean solids.
Also, $A\in O(3)$\,, not only $A\in SO(3)$\,, which is more general than the property invoked by \citet{IvanovStovas2016,IvanovStovas2017}.
\section{Numerical example}
\label{sec:NumEx}
Consider an orthotropic tensor \citep[Table~2]{IvanovStovas2016}, whose components are
\begin{align*}
&c_{1111}=6.3\,,\quad
c_{2222}=6.9\,,\quad
c_{3333}=5.4\,,\\
&c_{1122}=c_{2211}=2.7\,,\quad
c_{1133}=c_{3311}=2.2\,,\quad
c_{2233}=c_{3322}=2.4\,,\\
&c_{1212}=c_{2112}=c_{2121}=c_{1221}=1.5\,,\quad
c_{1313}=c_{3113}=c_{3131}=c_{1331}=0.8\,,\quad
c_{2323}=c_{3223}=c_{3232}=c_{2332}=1.0\,.
\end{align*}
Also, consider vector~$p=\Big[0,0,\sqrt{\tfrac{1}{c_{3333}}}\,\Big]$\,.
Rotating this vector by
\begin{equation}
\label{eq:Rot}
A=
\left[\begin{array}{ccc}
1 & 0 & 0\\
0 & \cos\theta & -\sin\theta\\
0 & \sin\theta & \cos\theta
\end{array}\right]\,,
\end{equation}
with an arbitrary angle of $\theta=\tfrac{\pi}{5}$\,,
and the tensor by $A^t$\,, we obtain
\begin{equation*}
\hat\Gamma
:=
\sum_{j=1}^3\sum_{\ell=1}^3c_{ijk\ell}\,\hat p_j\hat p_\ell
=
\left[\begin{array}{ccc}
0.192934 & 0 & 0\\
0 & 0.331749 & -0.0184241\\
0 & -0.0184241 & 0.949406
\end{array}\right]
\end{equation*}
and
\begin{equation*}
\check\Gamma
:=
\sum_{j=1}^3\sum_{\ell=1}^3\check c_{ijk\ell}\,p_jp_\ell
=
\left[\begin{array}{ccc}
0.192934 & 0 & 0\\
0 & 0.562667 & -0.299407\\
0 & -0.299407 & 0.718488
\end{array}\right]\,,
\end{equation*}
respectively.
The eigenvalues of these matrices are the same, $\lambda_1=0.949955$\,, $\lambda_2=0.33120$ and $\lambda_3=0.192934$\,, as required for similar matrices.
Their corresponding eigenvectors are related by transformation~(\ref{eq:Rot}).

Herein, $\det[\hat\Gamma-I]=-0.027013=\det[\check\Gamma-I]$\,.
In general, the two determinants are equal to one another.
Hence, if $\det[\hat\Gamma-I]=0$\,, so does $\det[\check\Gamma-I]$\,, and {\it vice versa}.

The equivalence of equations~(\ref{eq:phat}) and (\ref{eq:chat}) does not imply their equivalence to
\begin{equation*}
\det\left[\sum_{j=1}^3\sum_{\ell=1}^3c_{ijk\ell}\,p_jp_\ell-\delta_{ik}\right]=0\,,\qquad i,k=1,2,3\,.
\end{equation*}
The eigenvalues of
\begin{equation*}
\Gamma
:=
\sum_{j=1}^3\sum_{\ell=1}^3c_{ijk\ell}\,p_jp_\ell
=
\left[\begin{array}{ccc}
0.148148 & 0 & 0 \\
0 & 0.185185 & 0\\
0 & 0 & 1
\end{array}\right]
\end{equation*}
are $\lambda_1=0.148148$\,, $\lambda_2=0.185185$ and $\lambda_3=1$\,, which are distinct from the eigenvalues of $\hat\Gamma$ and $\check\Gamma$\,.
Herein---in view of $p$ and $c_{ijk\ell}$ representing, respectively, the slowness vector along the~$x_3$-axis and its corresponding elasticity tensor---$\det[\Gamma-I] = 0$\,, which results in the eikonal equations.
We emphasize, however, that Proposition~\ref{prop:Len} and Corollary~\ref{cor:Michael} are valid for arbitrary vectors and fourth-rank tensors, even though, in this example, they are related by the Christoffel equations.
\section*{Acknowledgments}
The authors wish to acknowledge Igor Ravve for bringing this issue to their attention, Yuriy Ivanov for presenting and clarifying aspects of this problem, Sandra Forte for fruitful discussions, and David Dalton for insightful proofreading.

This research was performed in the context of The Geomechanics Project supported by Husky Energy. 
Also, this research was partially supported by the Natural Sciences and Engineering Research Council of Canada, grant~202259.
\bibliography{BSSVarXiv.bib}

\begin{thebibliography}{}

\bibitem[\protect\astroncite{Ivanov and Stovas}{2016}]{IvanovStovas2016}
Ivanov, Y. and Stovas, A. (2016).
\newblock Normal moveout velocity ellipse in tilted orthorhombic media.
\newblock {\em Geophysics}, 81(6):319--336.

\bibitem[\protect\astroncite{Ivanov and Stovas}{2017}]{IvanovStovas2017}
Ivanov, Y. and Stovas, A. (2017).
\newblock Traveltime parameters in tilted orthorhombic medium.
\newblock {\em Geophysics}, 82(6):187--200.

\bibitem[\protect\astroncite{Slawinski}{2015}]{Slawinski}
Slawinski, M.~A. (2015).
\newblock {\em Waves and rays in elastic continua}.
\newblock World Scientific, 3 edition.

\end{thebibliography}
\bibliographystyle{apa}
\end{document}